\newtheorem{thm}{Theorem}
\newtheorem{lmm}[thm]{Lemma}
\def\be{\begin{equation}}
\def\ee{\end{equation}}
\def\bi{\begin{itemize}}
\def\ei{\end{itemize}}
\newcommand{\R}{\mathbb{R}}
\newcommand{\cor}{\mathrm{Cor}}
\newcommand{\beh}{\mathrm{Beh}}
\newcommand{\cut}{\mathrm{Cut}}
\newcommand{\met}{\mathrm{Met}}
\newcommand{\calE}{\mathcal{E}}
\def\bi{\begin{itemize}}
\def\ei{\end{itemize}}
\newcommand{\abs}[1]{| #1 |}
\newcommand{\tr}{\mathrm{tr}}
\newcommand{\ip}[2]{\left\langle #1 , #2\right\rangle} 
\begin{document}

\title{Computing quantum Bell inequalities}


\author{Le Phuc Thinh}
\email{cqtlept@nus.edu.sg}
\affiliation{Centre for Quantum Technologies, National University of Singapore}


\begin{abstract}
Understanding the limits of quantum theory in terms of uncertainty and correlation has always been a topic of foundational interest. Surprisingly this pursuit can also bear interesting applications such as device-independent quantum cryptography and tomography or self-testing. Building upon a series of recent works on the geometry of quantum correlations, we are interested in the problem of computing quantum Bell inequalities or the boundary between quantum and post-quantum world. Better knowledge of this boundary will lead to more efficient device-independent quantum processing protocols. We show that computing quantum Bell inequalities is an instance of a quantifier elimination problem, and apply these techniques to the bipartite scenario in which each party can have three measurement settings. Due to heavy computational complexity, we are able to obtain the characterization of certain ``linear" relaxation of the quantum set for this scenario. The resulting quantum Bell inequalities are shown to be equivalent to the Tsirelson-Landau-Masanes arcsin inequality, which is the only type of quantum Bell inequality found since 1987.
\end{abstract}
\pacs{03.65.Aa, 03.65.Ud, 03.65.Wj}

\maketitle

\section{\label{sec:intro}Introduction}
One of the most striking feature of quantum theory is the phenomenon of Bell nonlocality: the theory allows certain special correlations between distant observers that cannot be explained by any local hidden variable model~\cite{bell,bookNL,reviewNL}. Such correlations form the basis of cryptographic security, as well as many modern applications of quantum phenomena.

The set of correlations admitting local hidden variable description has been shown to be a convex polytope whose extreme points can be efficiently enumerated for any Bell scenario~\cite{Fine}. This has important consequence for experiment: it is a linear programming feasibility problem to test if a behavior is local or nonlocal. However, it is an entirely different problem to ask for {\em all the Bell inequalities}, i.e. facets of the local polytope, which remains a very difficult open problem in quantum information~\cite{allBell}. The main difficulty is related to the computational complexity of this task, and perhaps to our lack of knowledge about an underlying principles relating the various Bell inequalities. 

What is the analogous story for quantum correlations? In the simplest Bell scenario, Tsirelson, independently Landau and Masanes, derived the nonlinear ``quantum Bell inequality"
$$\arcsin(E_{11})+\arcsin(E_{12})+\arcsin(E_{21})-\arcsin(E_{22})\leq\pi$$
that must be obeyed by any quantum correlation~\cite{TS87,Landau,Masanes}. Later, Navascues {\em et al.} proposed a convergence hierarchy of semidefinite programs for testing quantum behaviors~\cite{npa}. This is clearly an analogue of the linear programming feasibility problem for membership within the local polytope, although the hierarchy quickly grows in size making the method impractical. The problem of computing {\em all the quantum Bell inequalities} for a given scenario can quickly be recognized, but one immediately run into serious technical issues.

With experimental progress, perhaps one day we could attempt testing the limit of quantum theory, particularly the limit of quantum correlations. Thus, knowledge about the boundary between quantum and post-quantum correlations is likely to assist designing new experiments near the boundary. This requires a better understanding of the quantum set of correlation, particularly its boundary structure~\cite{geometry}. Additionally, knowledge of the quantum set can be useful in designing new tomographic methods similar to self-testing~\cite{MY} and gate set tomography~\cite{GST}, as well as optimal device-independent protocols for information processing.

Progress in deriving quantum Bell inequalities is very rare. If one is only concern with necessary conditions, i.e. bounding the quantum set, then a flexible method based on the principle of macroscopic locality can be employed~\cite{qBell_ML}. However, if one cares about both necessary and sufficient conditions, i.e. quantum Bell inequalities corresponding to maximal faces, then the only significant progress known to date is made by Tsirelson for the correlation scenario~\cite{TS87}.

In this work we attempt to understand better the process of deriving or computing quantum Bell inequalities. In particular, we are interested in deriving a new type of quantum Bell inequality. We make the observation that whenever the NPA hierarchy collapse to a finite level, as is the case for correlation scenarios, computing quantum Bell inequalities can be thought of as a {\em quantifier elimination} problem. We give several characterizations of $\cor(3,3)$, the quantum set of bipartite correlation with three measurement choices per party. We found that generic quantifier elimination algorithms are unable to handle this scenario. Fortunately, we can perform quantifier elimination for a ``linear" relaxation of $\cor(3,3)$, and found several apparent new quantum Bell inequalities, which turns out to be implied by the system of Tsirelson's inequalities. Lastly, we sketch several connections between $\cor(n,m)$ with the metric and the cut polytopes, which are useful in understanding quantum correlation sets in general.

\section{\label{sec:nonlocality}Quantum correlations}
Quantum correlations arises from experiments of the following type: there are two space-like separated parties, called Alice and Bob, perform independently and simultaneously local {\em dichotomic} measurements on their corresponding subsystems of a joint quantum system, and record the resulting outcomes. The event that the Alice performed measurement $x$ and got outcome $a$, and Bob performed measurement  $y$ and got outcome $b$ is denoted by $(a,b|x,y)$. By repetition, they can estimate a family of probabilities $p(a,b|x,y)$, known as the \emph{full behavior}, for all possible events. The full behavior characterizes the probabilistic response of the quantum system under chosen measurements and can be thought of as a shadow of the quantum state.

A full behavior $p(a,b|x,y)$ coming from quantum theory must be a result of measuring a quantum state with some choices of quantum measurement. That is there exists state $\rho$, and local measurement POVMs $\{M^x_a\}$ and $\{M^y_b\}$ such that
$$p(a,b|x,y)=\tr(\rho\,M^x_a\otimes M^y_b)\,.$$
Quantum behaviors are resources that cannot be used for superluminal communication because no-signalling holds
\begin{equation}
\label{eq:no-sig}
\begin{aligned}
 \sum_b p(a,b|x,y) &= \sum_b p(a,b|x,y') =: p_A(a|x)\,,\\
\sum_a p(a,b|x,y) &= \sum_a p(a,b|x',y) =: p_B(b|y)\,.
\end{aligned}
\end{equation}

Dichotomic measurement means that the measurement has two possible outcomes, which we will denote by $+1$ and $-1$. This allows the use of correlators to describe equivalently a behavior: $p(a,b|x,y) \leftrightarrow (c_x, c_y, c_{xy})$
where
$$c_x = \sum_{a\in \{\pm 1\}}a\,p_A(a|x) \quad 
c_y = \sum_{b\in \{\pm 1\}} b\,p_B(b|y),$$ 
$$c_{xy} = \sum_{a,b\in \{\pm 1\}} ab\,p(a,b|x,y).$$
Under this parametrization, a {\em full correlator}  $(c_x,c_y, c_{xy})$ is quantum if
\be\label{qcorrelation}
\begin{aligned}
c_x&=\tr(\rho(A_x\otimes I))\,,\\
c_y & =\tr(\rho(I\otimes B_y))\,,\\
 c_{xy} & = \tr(\rho(A_x\otimes B_y))\,,
\end{aligned}
\ee
for some choice of $\pm 1$ {\em observables} $A_x, B_y$ (via Naimark dilation).

The set of full behaviors or full correlators is denoted $\beh(n,m)$, whereas the projection of this set along joint correlation coordinates $c_{xy}$ is denoted as $\cor(n,m)$. The lack of {\em local marginals} turns out to have deep consequences. Even for the simplest scenario $\beh(2,2)$ can only be approximated using the NPA hiearchy, whereas $\cor(n,m)$ admits {\em exact} description for arbitrary $n,m$. In other words, it is not known that the NPA hiearchy for $\beh(2,2)$ collapse to any finite level.  

{\bf Important notation:} The set of quantum correlators can also be described by a nonlinear coordinate system: the pairwise-angle coordinates. The coordinate transformation is defined as $c \mapsto \hat{c}:=\arccos(c)\in[0,\pi]$ individually for each pair of setting $(x,y)$. Thus, a point of $\cor(n,m)$ can be equivalently specified as $(\hat{c}_{xy})$ in the pairwise-angle parametrization and corresponds to the point $(\cos\hat{c}_{xy})$ in the usual linear coordinates. We use this notation throughout the paper.

\section{\label{sec:cor33}Exact characterizations of $\cor(3,3)$}

Exact analytic description of $\cor(2,m)$ has been obtained in a previous work~\cite{Thinh}: $(c_{xy})\in \cor(2,m)$ if and only if it satisfies the following system
\begin{gather*}
0 \leq \hat{c}_{xy}\leq \pi,\\
0 \leq \hat{c}_{1j} + \hat{c}_{2i} + \hat{c}_{2j} - \hat{c}_{1i} \leq 2\pi ,\\
0 \leq \hat{c}_{1i} + \hat{c}_{2i} + \hat{c}_{2j} - \hat{c}_{1j} \leq 2\pi, \\
0 \leq \hat{c}_{1i} + \hat{c}_{1j} + \hat{c}_{2j} - \hat{c}_{2i} \leq 2\pi, \\
0 \leq \hat{c}_{1i} + \hat{c}_{1j} + \hat{c}_{2i} - \hat{c}_{2j} \leq 2\pi,
\end{gather*}
for all $x\in[2],y\in[m]$ and $3\le i<j\le m+2$. The trivial inequalities $0 \leq \hat{c}_{xy}\leq \pi$ or equivalently $-1 \leq c_{xy}\leq 1$ are called the {\em box inequalities}. The resulting inequalities are new in the sense that $\cor(2,n)$ has never been characterized, but only of the Tsirelson-Landau-Masanes {\em type}. Therefore the next scenario which we could hope to find new quantum Bell inequalities and moreover new type of inequalities is the $(3,3)$ scenario where Alice and Bob each has three dichotomic measurements. In this section, we study various characterizations of the set of quantum correlation for this scenario.

We start with Tsirelson's characterization of $\cor(3,3)$: $(c_{xy})\in\cor(3,3)$ if and only if there exists six unit vectors $u_x,v_y\in\R^{6}$ such that $\ip{u_x}{v_y}=c_{xy}$. This characterization of the set of all possible quantum correlations looks deceptively simple, but is in fact a remarkable consequence of correlation scenarios being describable by a Clifford algebra. The geometric interpretation of Tsirelson's result gives rise to the pairwise-angle parametrization which we mentioned in Section~\ref{sec:nonlocality}.

Tsirelson's characterization, however, does not give an {\em explicit solution} to the question of when a given correlation is quantum. For this purpose, we can rephrase Tsirelson' result in the following matrix-theoretic form: $(c_{xy})\in\cor(3,3)$ if the following partial symmetric matrix admits a psd completion
$$
\begin{pmatrix}
1 & ? &  ? &  c_{11} & c_{12} & c_{13}\\
? & 1  & ? & c_{21} & c_{22}& c_{23} \\
? & ? & 1& c_{31} & c_{32}& c_{33} \\
c_{11} & c_{21} & c_{31} &  1& ?& ? \\
c_{12} & c_{22} & c_{32} &  ? & 1 & ? \\
c_{13} & c_{23}  & c_{33} & ? & ? & 1  
\end{pmatrix}\,.
$$
This form is convenient for numerical investigations due to the fact that semidefinite programming can be solved efficiently. Further, this form of characterizing quantum correlations brings new perspectives and tools---matrix completion, geometry of the elliptope---for thinking and solving problems involving quantum correlations~\cite{wolk,L97,BJT,shin}. For example, one can use this form to find optimal quantum violation of Bell correlation inequalities.

The psd completion characterization of $\cor(3,3)$, though convenient for numerical works, can in fact be used for analytic investigations although perhaps limited to small size problems. {\em Sylvester's criterion} for positive-semidefiniteness states that a Hermitian matrix is psd if and only if all principal minors of the matrix are nonnegative~\cite{matrix_analysis}. Recall that a principal submatrix is a smaller matrix obtained from the original matrix by selecting (equivalently deleting) the same index set of rows and columns; a principal minor is the determinant of a principal submatrix.

By Sylvester's criterion, $(c_{xy})\in\cor(3,3)$ if and only if there exists real numbers $\alpha,\beta,\gamma,\delta,\chi,\xi$ such that all $2^6-1$ principal minors of
$$
\begin{pmatrix}
1 & \alpha &  \beta &  c_{11} & c_{12} & c_{13}\\
? & 1  & \gamma & c_{21} & c_{22}& c_{23} \\
? & ? & 1& c_{31} & c_{32}& c_{33} \\
c_{11} & c_{21} & c_{31} &  1& \delta & \chi \\
c_{12} & c_{22} & c_{32} &  ? & 1 & \xi \\
c_{13} & c_{23}  & c_{33} & ? & ? & 1  
\end{pmatrix}\,
$$
are non-negative. However, this naive approach leads to an untractable solution because of the {\em exponential} growth in the number of principal minors. Using the results related to {\em chordal completion}~\cite{BJT}, one can simplify the situation somewhat and obtain a more efficient characterization by reducing the number of principal minors. This is our first result:
\begin{lmm}\label{pro:cor33analytic}
$(c_{xy})\in\cor(3,3)$ if and only if there exists $\alpha,\beta,\gamma\in[-1,1]$ satisfying the system of polynomial inequalities for $y=1,2,3$
\begin{equation}\label{eq:cor33_linear}
\begin{aligned}
1&-\sum_{x=1}^3 c_{xy}^2 -\alpha^2 -\beta^2 -\gamma^2 +2c_{1y}c_{2y}\alpha +2c_{1y}c_{3y}\beta \\
  &+ 2c_{2y}c_{3y}\gamma +2\alpha\beta\gamma +c_{3y}^2\alpha^2 +c_{2y}^2\beta^2 +c_{1y}^2\gamma^2 \\
  &-2c_{2y}c_{3y}\alpha\beta -2c_{1y}c_{3y}\alpha\gamma - 2c_{1y}c_{2y}\beta\gamma \geq0\,, \\
1&-c_{2y}^2 -c_{3y}^2 -\gamma^2 +2c_{2y}c_{3y}\gamma \geq0\,, \\
1&-c_{1y}^2 -c_{3y}^2 -\beta^2 +2c_{1y}c_{3y}\beta \geq0\,, \\
1&-c_{1y}^2 -c_{2y}^2 -\alpha^2 +2c_{1y}c_{2y} \geq0\,, \\
\end{aligned}
\end{equation}
or equivalently if and only if there exists $\hat\alpha,\hat\beta,\hat\gamma\in[0,\pi]$ satisfying the linear system of inequalities for $y=1,2,3$
\begin{equation}\label{eq:cor33analytic_angles_linear}
\begin{aligned}
&\hat{\alpha}\leq\hat{\beta}+\hat{\gamma}, \hat{\beta}\leq\hat{\alpha}+\hat{\gamma}, \hat{\gamma}\leq\hat{\alpha}+\hat{\beta}, \hat{\alpha}+\hat{\gamma}+\hat{\beta}\leq2\pi,\\
&\hat{\alpha}\leq\hat{c}_{1y}+\hat{c}_{2y}, \hat{c}_{1y}\leq\hat{c}_{2y}+\hat{\alpha}, \hat{c}_{2y}\leq\hat{\alpha}+\hat{c}_{1y},\\
&\hat{\alpha}+\hat{c}_{1y}+\hat{c}_{2y}\leq2\pi,\\
&\hat{\beta}\leq\hat{c}_{1y}+\hat{c}_{3y}, \hat{c}_{1y}\leq\hat{c}_{3y}+\hat{\beta},  \hat{c}_{3y}\leq\hat{\beta}+\hat{c}_{1y}\\
&\hat{\beta}+\hat{c}_{1y}+\hat{c}_{3y}\leq2\pi, \\
&\hat{\gamma}\leq\hat{c}_{2y}+\hat{c}_{3y}, \hat{c}_{2y}\leq\hat{c}_{3y}+\hat{\gamma}, \hat{c}_{3y}\leq\hat{\gamma}+\hat{c}_{2y}\\
&\hat{\gamma}+\hat{c}_{2y}+\hat{c}_{3y}\leq2\pi,
\end{aligned}
\end{equation}
and the nonlinear system of inequalities for $y=1,2,3$
\begin{equation}\label{eq:cor33analytic_angles_nonlinear}
\begin{aligned}
(&\sin\hat{c}_{1y}\sin\hat{c}_{2y}\sin\hat{c}_{3y})^2\, \\
&- (\sin\hat{c}_{3y}(\cos\hat{\alpha}-\cos\hat{c}_{1y}\cos\hat{c}_{2y}))^2\, \\
&- (\sin\hat{c}_{2y}(\cos\hat{\beta}-\cos\hat{c}_{1y}\cos\hat{c}_{3y}))^2\, \\
& - (\sin\hat{c}_{1y}(\cos\hat{\gamma}-\cos\hat{c}_{2y}\cos\hat{c}_{3y}))^2\, \\ 
& - 2(\cos\hat{\alpha}-\cos\hat{c}_{1y}\cos\hat{c}_{2y})(\cos\hat{\beta}-\cos\hat{c}_{1y}\cos\hat{c}_{3y})\, \\
&\hphantom{-2(\cos\hat{\alpha}-\cos\hat{c}_{1y}\cos)}(\cos\hat{\gamma}-\cos\hat{c}_{2y}\cos\hat{c}_{3y}) \geq 0\,.
\end{aligned}
\end{equation}
\end{lmm}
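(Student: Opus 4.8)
The plan is to peel the claim down to the psd-completion characterization recalled just above, collapse the (a priori exponential) list of principal minors by exploiting chordality, and finally pass to the pairwise-angle coordinates. I would organize it in three steps.

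\emph{Step 1 (make the pattern chordal).} By Tsirelson's matrix form, $(c_{xy})\in\cor(3,3)$ iff the $6\times6$ partial matrix with unit diagonal, the $c_{xy}$ filling the Alice--Bob block, and the Alice--Alice and Bob--Bob off-diagonal entries unspecified admits a psd completion. The key first move is to observe that such a completion exists iff there is a choice of the Alice block $(\alpha,\beta,\gamma)$ for which the partial matrix with the Alice block now \emph{prescribed} (only the Bob block still free) admits a psd completion: one direction reads $(\alpha,\beta,\gamma)$ off an honest completion, the other is immediate since prescribing more entries only restricts. The specified-entry graph of this new partial matrix has the three Alice vertices forming a triangle, the three Bob vertices forming an independent set, and all nine Alice--Bob pairs as edges; because the Alice part is a clique, this graph is chordal, and its maximal cliques are exactly the three $4$-sets $\{1,2,3,y+3\}$, $y=1,2,3$. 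By the chordal psd-completion theorem~\cite{BJT}, the completion exists iff the three $4\times4$ principal submatrices
\[ M_y=\begin{pmatrix}1&\alpha&\beta&c_{1y}\\ \alpha&1&\gamma&c_{2y}\\ \beta&\gamma&1&c_{3y}\\ c_{1y}&c_{2y}&c_{3y}&1\end{pmatrix},\quad y=1,2,3, \]
are all positive semidefinite.

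\emph{Step 2 (Sylvester, with redundancies removed).} Next I would apply Sylvester's criterion to each $M_y$ and collect the distinct inequalities across $y=1,2,3$. The $1\times1$ minors are trivial; the $2\times2$ minors give $\alpha,\beta,\gamma\in[-1,1]$ and the box inequalities $c_{xy}\in[-1,1]$; the $3\times3$ minors are the ($y$-independent) Alice-block minor $1-\alpha^2-\beta^2-\gamma^2+2\alpha\beta\gamma\ge0$, to be understood among the minors, together with, for each $y$, the three ``cross'' minors appearing as the last three lines of~\eqref{eq:cor33_linear}; and the $4\times4$ minors are $\det M_y\ge0$. Expanding $\det M_y$ as the Schur complement of its $(4,4)$ entry, $\det M_y=\det G_A-(c_{1y},c_{2y},c_{3y})\,\mathrm{adj}(G_A)\,(c_{1y},c_{2y},c_{3y})^{T}$ with $G_A$ the Alice block (so $\mathrm{adj}(G_A)_{11}=1-\gamma^2$, $\mathrm{adj}(G_A)_{12}=\beta\gamma-\alpha$, and so on), reproduces term for term the first inequality of~\eqref{eq:cor33_linear}. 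Together with $\alpha,\beta,\gamma\in[-1,1]$ this is the polynomial equivalence.

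\emph{Step 3 (angle coordinates).} Finally I would substitute $c_{xy}=\cos\hat c_{xy}$ and $\alpha=\cos\hat\alpha,\beta=\cos\hat\beta,\gamma=\cos\hat\gamma$, all hatted quantities in $[0,\pi]$. A $3\times3$ Gram-type minor $1-\cos^2\theta_1-\cos^2\theta_2-\cos^2\theta_3+2\cos\theta_1\cos\theta_2\cos\theta_3\ge0$ is, for $\theta_i\in[0,\pi]$, equivalent to the quadruple $\theta_i\le\theta_j+\theta_k$ (over distinct $i,j,k$) together with $\theta_1+\theta_2+\theta_3\le2\pi$ --- the classical realizability condition for three unit vectors with those pairwise angles. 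Applying this to the Alice-block minor and to each triple of cross minors produces exactly the linear system~\eqref{eq:cor33analytic_angles_linear}; the same substitution into the first inequality of~\eqref{eq:cor33_linear}, after regrouping via $\sin^2=1-\cos^2$, is the Cayley--Menger-type identity giving~\eqref{eq:cor33analytic_angles_nonlinear} (a finite polynomial check, easily verified on degenerate configurations, e.g.\ $\alpha=1$ with all other entries $0$, where both sides vanish). Chaining Steps 1--3 yields both stated equivalences.

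\emph{Main obstacle.} The conceptually delicate step is Step 1: choosing the right chordal enlargement of the non-chordal $K_{3,3}$ pattern by prescribing the Alice block, correctly identifying its maximal cliques, and applying the completion theorem so that the naive $2^6-1$ minors collapse to the short list above. Everything after that --- the Schur-complement expansion of $\det M_y$, the bookkeeping of which minors remain independent (in particular the status of the $y$-independent Alice-block minor, which resurfaces as the $\hat\alpha,\hat\beta,\hat\gamma$-triangle part of~\eqref{eq:cor33analytic_angles_linear}), and the trigonometric rewriting --- is mechanical, but must be carried out carefully so that no constraint is silently dropped.
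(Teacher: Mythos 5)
Your proposal is correct and follows essentially the same route as the paper: prescribe the Alice block to make the specification graph chordal, reduce to positive semidefiniteness of the three $4\times4$ maximal-clique submatrices, apply Sylvester's criterion (the paper takes the Schur complement of the $(4,4)$ entry first for the polynomial form, but the resulting minors coincide with yours term for term), and convert to angles via the elliptope linearization. No substantive difference.
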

\begin{proof}
A necessary condition on the correlations is that the partial symmetric matrix must be partial positive-semidefinite. This means that any specified principal submatrix (consisting of known entries) is psd. This leads to box inequalities on the correlations $(c_{xy})$, i.e. $-1\leq c_{xy}\leq1$ for all $x,y$. If we can find entries $\alpha,\beta,\gamma$ such that
$$
X = \begin{pmatrix}
1 &\alpha &  \beta &  c_{11} & c_{12} & c_{13}\\
\alpha & 1  & \gamma& c_{21} & c_{22}& c_{23} \\
\beta & \gamma & 1& c_{31} & c_{32}& c_{33} \\
c_{11} & c_{21} & c_{31} &  1& ?& ? \\
c_{12} & c_{22} & c_{32} &  ? & 1 & ? \\
c_{13} & c_{23}  & c_{33} & ? & ? & 1  
\end{pmatrix}
$$
is partial positive-semidefinite, then by chordal completion~\cite{BJT}, the missing $?$ entries can be filled making the whole matrix psd. (This is because the graph associated with the matrix completion problem of $X$ is now a chordal graph, i.e. any circuit with length at least 4 has a chord, and sufficient conditions for completion to exist are known.) Hence, $(c_{xy})$ satisfying the box inequalities is a member of $\cor(3,3)$ if and only if there exist $\alpha,\beta,\gamma$ such that $X$ is partial psd.

Now the condition that $X$ is partial psd means for any specified principal submatrix is psd. It suffices to consider {\em maximal} specified principal submatrices
$$
X_y = \begin{pmatrix}
1 &\alpha &  \beta &  c_{1y} \\
\alpha & 1  & \gamma& c_{2y} \\
\beta & \gamma & 1& c_{3y} \\
c_{1y} & c_{2y} & c_{3y} & 1\\
\end{pmatrix} \text{ for } y=1,2,3\,.
$$
Viewing $X_y$ as the block matrix
$$
\left(\begin{matrix}
A & B_y\\
C_y & D
\end{matrix}\right)
$$
where $D = 1$ and 
$$
A = \left(\begin{matrix}
1 &\alpha &  \beta \\
\alpha & 1  & \gamma\\
\beta & \gamma & 1
\end{matrix}\right)\,,
B_y = \left(\begin{matrix}
c_{1y}\\
c_{2y}\\
c_{3y}
\end{matrix}\right), C_y = B_y^\intercal,
$$
and applying Schur's complement for positive-semidefiniteness~\cite{matrix_analysis}:
$$
X_y\geq0 \Leftrightarrow A-B_yB_y^\intercal\geq0
$$
we have $c_{xy}\in\cor(3,3)$ iff there exist $\alpha,\beta,\gamma$ that for $y=1,2,3$,
\begin{align}
\begin{pmatrix}
1 &\alpha &  \beta \\
\alpha & 1  & \gamma\\
\beta & \gamma & 1
\end{pmatrix} - \begin{pmatrix}
c_{1y}^2 & c_{1y}c_{2y} & c_{1y}c_{3y}\\
c_{1y}c_{2y} & c_{2y}^2 & c_{2y}c_{3y}\\
c_{1y}c_{3y} & c_{2y}c_{3y} & c_{3y}^2
\end{pmatrix}\geq0
\end{align}
Using Sylvester's criterion for each $y$ result in the first characterization.

To obtain the second characterization in the pairwise-angle parametrization, we use the well-known linearization of the elliptope of dimension three~\cite{BJT}: for $\theta_1,\theta_2,\theta_3\in[0,\pi]$, the matrix 
$$
\begin{pmatrix}1 & \cos\theta_1& \cos\theta_2\\
\cos\theta_1& 1 & \cos\theta_3\\
\cos\theta_2 & \cos\theta_3 & 1 
\end{pmatrix}$$
is positive semidefinite if and only if 
\begin{gather*}
\theta_1\leq\theta_2+\theta_3, \quad \theta_2\leq\theta_3+\theta_1, \quad \theta_3\leq\theta_1+\theta_2,\\
\theta_1 + \theta_2 + \theta_3 \le 2\pi.
\end{gather*}
The condition that $X_y\geq0$ for $y=1,2,3$ is equivalent by Sylvester's criterion to the system of linear inequalities in the angles except for the set of inequalities arising from the 4-by-4 minors. By Schur's determinant identity, these inequalities are
$$
\text{det}(X_y) = \text{det}\left[\left(\begin{matrix}
1 &\cos\hat{\alpha} &  \cos\hat{\beta} \\
\cos\hat{\alpha} & 1  & \cos\hat{\gamma}\\
 \cos\hat{\beta} & \cos\hat{\gamma} & 1
\end{matrix}\right)-B_yB_y^\intercal\right]\geq0\,,
$$
now re-expressed in the pairwise-angle. This completes the characterization of $\cor(3,3)$ in pairwise-angle coordinates.
\end{proof}

The characterization~\eqref{eq:cor33_linear} is exact but still implicit due to the existence of the unknown variables $\alpha,\beta,\gamma$. To obtain a complete solution, analogous to the description of $\cor(2,n)$, we have to eliminate the variables $\alpha,\beta,\gamma$ from the system of Proposition~\ref{pro:cor33analytic}. Such objectives can be done in principle using the method of {\em quantifier elimination}~\cite{QECAD}, which is a concept of simplification originating from mathematical logic, model theory and computer science, but has since found applications in various fields of science and engineering. The most familiar example of quantifier elimination is the high school algebra fact
\begin{align*}
\exists x\in\mathbb{R},(a\neq0 &\land ax^2+bx+c=0)\\
&\iff (a\neq0 \land b^2-4ac\geq0)\,.
\end{align*}
The right hand side is an equivalent quantifier free formula for the left hand side, which answers the question ``when does a single variable quadratic polynomial equation has a real root?"

We implemented a program to compute quantum Bell inequalities for $\cor(3,3)$ in QEPCAD~\cite{QEPCAD} as well as REDLOG~\cite{REDLOG}. Unfortunately, our problem is a ``large-size" problem for generic quantifier elimination algorithms to handle. Moreover, these algorithms, as far as we understand, do exploit additional symmetry of $\cor(3,3)$, nor the elliptope geometry induced by the convex geometry of the psd cone. Consequently, we are unable to obtain a definitive characterization, i.e. a quantifier free description, of $\cor(3,3)$ purely in terms of the given correlations $(c_{xy})$.

\section{\label{sec:relaxation}Relaxation of $\cor(3,3)$ and quantum Bell inequalities}
Apart from the inequalities from the 4-by-4 minors, the system of inequalities~\eqref{eq:cor33analytic_angles_linear} are all linear in the pairwise-angle coordinate. This is an important fact that greatly simplifies the process of quantifier elimination. Therefore, if we settle for a bound on the quantum set $\cor(3,3)$, a good candidate will be the set defined by those linear inequalities. We begin with a characterization of the relaxed set.
\begin{lmm}\label{lmm:relax_equiv}
There exists $\hat\alpha,\hat\beta,\hat\gamma\in[0,\pi]$ satisfying the linear system of inequalities~\eqref{eq:cor33analytic_angles_linear}
if and only if the correlators satisfy the linear system for all $x,x',y,y',{\bar y}\in\{1,2,3\}$
\begin{gather*}
\abs{\hat{c}_{xy}-\hat{c}_{x'y}} + \abs{\hat{c}_{xy'} + \hat{c}_{x'y'} - \pi} \leq \pi\,, \\
\abs{\hat{c}_{1y}-\hat{c}_{2y}} + \abs{\hat{c}_{2y'}-\hat{c}_{3y'}} + \abs{\hat{c}_{1\bar y} -\hat{c}_{3\bar y}} \leq 2\pi\,,  \\
\abs{\hat{c}_{1y}-\hat{c}_{2y}} + \abs{\hat{c}_{2\tilde y'} + \hat{c}_{3\tilde y'} - \pi} + \abs{\hat{c}_{1\bar y} + \hat{c}_{3\bar y} - \pi}  \leq 2\pi\,, \\
\abs{\hat{c}_{1y} + \hat{c}_{2y} - \pi} + \abs{\hat{c}_{2y'}-\hat{c}_{3y'}} + \abs{\hat{c}_{1\bar y} + \hat{c}_{3\bar y} - \pi}  \leq 2\pi\,,  \\
\abs{\hat{c}_{1y} + \hat{c}_{2y} - \pi} + \abs{\hat{c}_{2y'} + \hat{c}_{3y'} - \pi} + \abs{\hat{c}_{1\bar y}-\hat{c}_{3\bar y}} \leq 2\pi\,.
\end{gather*}
\end{lmm}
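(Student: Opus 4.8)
\emph{Proof plan.}
The statement is a quantifier-elimination problem over a \emph{linear} system, so the plan is to carry out Fourier--Motzkin elimination of $\hat\alpha,\hat\beta,\hat\gamma$, one variable at a time, after a preprocessing step that collapses each block of triangle constraints into a single interval. First I would observe that, for fixed $y$, the four inequalities of \eqref{eq:cor33analytic_angles_linear} tying $\hat\alpha$ to $\hat{c}_{1y},\hat{c}_{2y}$ are equivalent to $|\hat{c}_{1y}-\hat{c}_{2y}|\le\hat\alpha\le\min(\hat{c}_{1y}+\hat{c}_{2y},\,2\pi-\hat{c}_{1y}-\hat{c}_{2y})=\pi-|\hat{c}_{1y}+\hat{c}_{2y}-\pi|$, using $\min(s,2\pi-s)=\pi-|s-\pi|$. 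Intersecting over $y=1,2,3$ gives $\hat\alpha\in[L_\alpha,U_\alpha]$ with $L_\alpha=\max_y|\hat{c}_{1y}-\hat{c}_{2y}|$ and $U_\alpha=\min_y(\pi-|\hat{c}_{1y}+\hat{c}_{2y}-\pi|)$, and likewise $\hat\beta\in[L_\beta,U_\beta]$ from the pair $(1,3)$ and $\hat\gamma\in[L_\gamma,U_\gamma]$ from $(2,3)$. Since $0\le L_\bullet$ and $U_\bullet\le\pi$, the box constraints $[0,\pi]$ are automatic, and the whole system reduces to: do there exist $\hat\alpha\in[L_\alpha,U_\alpha]$, $\hat\beta\in[L_\beta,U_\beta]$, $\hat\gamma\in[L_\gamma,U_\gamma]$ obeying the ``tetrahedral'' inequalities $\hat\alpha\le\hat\beta+\hat\gamma$, $\hat\beta\le\hat\alpha+\hat\gamma$, $\hat\gamma\le\hat\alpha+\hat\beta$ and $\hat\alpha+\hat\beta+\hat\gamma\le2\pi$?

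Next I would run Fourier--Motzkin. Eliminating $\hat\alpha$: its lower bounds are $L_\alpha$ and $\pm(\hat\beta-\hat\gamma)$, its upper bounds are $U_\alpha$, $\hat\beta+\hat\gamma$ and $2\pi-\hat\beta-\hat\gamma$; running through all ``lower $\le$ upper'' pairs and discarding those automatically true given $0\le L_\alpha\le U_\alpha\le\pi$ and $\hat\beta,\hat\gamma\in[0,\pi]$ (for instance $\hat\beta-\hat\gamma\le2\pi-\hat\beta-\hat\gamma$ reduces to $\hat\beta\le\pi$) leaves $L_\alpha\le U_\alpha$ together with $L_\alpha\le\hat\beta+\hat\gamma\le2\pi-L_\alpha$ and $|\hat\beta-\hat\gamma|\le U_\alpha$. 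Then eliminate $\hat\beta$ from this system (now also carrying $\hat\beta\in[L_\beta,U_\beta]$), and finally $\hat\gamma$; at each round the surviving inequalities are exactly those of the shape ``($L$-type)$\,\le\,$($U$-type)'' and the rest are redundant for the same elementary reason (e.g.\ in the last round $L_\beta-U_\alpha\le2\pi-L_\alpha-L_\beta$ follows from $L_\alpha\le U_\alpha$ and $L_\beta\le\pi$). The output of the three eliminations, after removing redundancies, is precisely $\{L_\alpha\le U_\alpha,\ L_\beta\le U_\beta,\ L_\gamma\le U_\gamma\}$, $\{L_\alpha+L_\beta+L_\gamma\le2\pi\}$, and $\{L_\alpha\le U_\beta+U_\gamma,\ L_\beta\le U_\alpha+U_\gamma,\ L_\gamma\le U_\alpha+U_\beta\}$.

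It then remains to unfold these back into the displayed form. A statement $\max_y(\cdot)\le\min_{y'}(\cdot)$ is by definition the family quantified over $(y,y')$, and $\pi-U_\alpha=\max_y|\hat{c}_{1y}+\hat{c}_{2y}-\pi|$ turns each $U_\bullet$ into a ``sum'' absolute value. Thus $L_\alpha\le U_\alpha$ (with its two cyclic partners, plus the trivial $x=x'$ cases) is the first family $|\hat{c}_{xy}-\hat{c}_{x'y}|+|\hat{c}_{xy'}+\hat{c}_{x'y'}-\pi|\le\pi$; $L_\alpha+L_\beta+L_\gamma\le2\pi$ is the second family; and $L_\alpha\le U_\beta+U_\gamma$, $L_\gamma\le U_\alpha+U_\beta$, $L_\beta\le U_\alpha+U_\gamma$ are the third, fourth and fifth families respectively, the lone ``difference'' term in each recording which of $\hat\alpha,\hat\beta,\hat\gamma$ supplied the active lower bound. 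Correctness of Fourier--Motzkin gives the equivalence in both directions: the forward direction reads off each inequality from the witnesses, the backward direction reconstructs them by back-substitution.

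The main obstacle is bookkeeping rather than conceptual: one must track on the order of nine candidate inequalities at each of the three elimination rounds and verify case by case that every discarded one is implied by $0\le L_\bullet\le U_\bullet\le\pi$, and then check that the index roles in the asymmetric families 3--5 line up with the interval that is active. A cleaner way to package the last two rounds is to prove once and for all the elementary fact that three nonempty subintervals $[L_a,U_a],[L_b,U_b],[L_c,U_c]$ of $[0,\pi]$ contain a point of the tetrahedral region iff $L_a\le U_b+U_c$, $L_b\le U_a+U_c$, $L_c\le U_a+U_b$ and $L_a+L_b+L_c\le2\pi$: necessity is immediate, and for sufficiency one checks that the interval $[\max(L_c,\,L_a-U_b,\,L_b-U_a),\ \min(U_c,\,U_a+U_b,\,2\pi-L_a-L_b)]$ for $\hat\gamma$ is nonempty under those hypotheses and then picks $\hat\alpha,\hat\beta$ accordingly. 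Finally, it is worth noting that the tetrahedral region is invariant under flipping any two of the angles via $\theta\mapsto\pi-\theta$ (the ``switching'' symmetry), which both explains why the answer mixes ``difference'' and ``sum'' absolute values and can be used to reduce the casework by roughly a factor of four.
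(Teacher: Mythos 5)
Your proposal is correct and follows essentially the same route as the paper: rewrite each block of triangle constraints as a two-sided bound $\abs{\theta_1-\theta_2}\leq\theta\leq\pi-\abs{\theta_1+\theta_2-\pi}$ and then run Fourier--Motzkin elimination on $\hat\alpha$, $\hat\beta$, $\hat\gamma$ in turn, discarding the redundant pairings using $0\leq L_\bullet\leq U_\bullet\leq\pi$. Your interval packaging $[L_\alpha,U_\alpha]$, $[L_\beta,U_\beta]$, $[L_\gamma,U_\gamma]$ and the resulting seven-condition normal form match the paper's final system exactly, just stated more compactly.
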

\begin{proof}
Our strategy is to gather all the inequalities involving $\hat{\alpha}$, eliminate this variable by enforcing any lower bound must be less than or equal to any upper bounds, then repeat for $\hat{\beta}$ and $\hat{\gamma}$. For compactness of the proof we rewrite
$$
\begin{rcases}
\abs{\beta-\gamma}
\end{rcases}
\leq\alpha\leq
\begin{cases}
\beta + \gamma\\
2\pi - (\beta + \gamma)
\end{cases}
$$
as a single compound inequality
$$
\abs{\beta-\gamma} \leq \alpha \leq \pi - \abs{\beta + \gamma - \pi}\,.
$$

First, the $\hat{\alpha}$-system of inequalities is given by
\begin{equation*}
\begin{rcases}
\abs{\hat{\beta}-\hat{\gamma}}\\
\abs{\hat{c}_{1y}-\hat{c}_{2y}}
\end{rcases}
\leq \hat{\alpha} \leq
\begin{cases}
\pi - \abs{\hat{\beta} + \hat{\gamma} - \pi}\\
\pi - \abs{\hat{c}_{1y'} + \hat{c}_{2y'} - \pi}
\end{cases}
\end{equation*}
for $y,y'\in\{1,2,3\}$, after we have absorb the inequalities $0\leq\hat{\alpha}\leq\pi$ into existing inequalities (due to $\abs{\hat{c}_{1y}-\hat{c}_{2y}}\geq0$ and $\pi - \abs{\hat{c}_{1y'} + \hat{c}_{2y'} - \pi}\leq\pi$). Eliminating $\hat{\alpha}$ gives
\begin{align}
\abs{\hat{\beta}-\hat{\gamma}} &\leq \pi - \abs{\hat{\beta} + \hat{\gamma} - \pi} \nonumber \\
\abs{\hat{\beta}-\hat{\gamma}} &\leq \pi - \abs{\hat{c}_{1y'} + \hat{c}_{2y'} - \pi} \nonumber \\
\abs{\hat{c}_{1y}-\hat{c}_{2y}} &\leq \pi - \abs{\hat{\beta} + \hat{\gamma} - \pi} \nonumber \\
\abs{\hat{c}_{1y}-\hat{c}_{2y}} &\leq \pi - \abs{\hat{c}_{1y'} + \hat{c}_{2y'} - \pi} \label{eq:12}
\end{align}
for all $y,y'\in\{1,2,3\}$. The numbered inequality does not involve the variable $\hat{\beta},\hat{\gamma}$ and is not required in the next reduction.

Second, the remaining system of inequalities is equivalent to
\begin{gather*}
0 \leq \hat{\beta} \leq \pi \text{ and } 0 \leq \hat{\gamma} \leq \pi\,, \\
\hat{\gamma} -\pi + \abs{\hat{c}_{1y'} + \hat{c}_{2y'} - \pi} \leq \hat{\beta} \leq \hat{\gamma} + \pi - \abs{\hat{c}_{1y'} + \hat{c}_{2y'} - \pi}\,,\\
-\hat{\gamma} + \abs{\hat{c}_{1y}-\hat{c}_{2y}}\leq \hat{\beta} \leq 2\pi -\hat{\gamma} - \abs{\hat{c}_{1y}-\hat{c}_{2y}}\,,
\end{gather*}
from which the $\hat{\beta}$-system of inequalities follows
\begin{align*}
\begin{rcases}
\abs{\hat{c}_{1\bar y}-\hat{c}_{3\bar y}}\\
\hat{\gamma} -\pi + \abs{\hat{c}_{1y'_1} + \hat{c}_{2y'_1} - \pi}\\
-\hat{\gamma} + \abs{\hat{c}_{1y_1}-\hat{c}_{2y_1}}
\end{rcases}
&\leq \hat{\beta} \\
\text{ and } \hat{\beta} \leq &
\begin{cases}
\pi - \abs{\hat{c}_{1\bar y'} + \hat{c}_{3\bar y'} - \pi}\\
\pi + \hat{\gamma} - \abs{\hat{c}_{1y'} + \hat{c}_{2y'} - \pi}\\
2\pi -\hat{\gamma} - \abs{\hat{c}_{1y}-\hat{c}_{2y}}
\end{cases}
\end{align*}
after absorbing $0\leq\hat{\beta}\leq\pi$.
Eliminating $\hat{\beta}$ gives
\begin{align}
\abs{\hat{c}_{1\bar y}-\hat{c}_{3\bar y}} &\leq \pi - \abs{\hat{c}_{1\bar y'} + \hat{c}_{3\bar y'} - \pi} \label{eq:13} \\
\abs{\hat{c}_{1\bar y}-\hat{c}_{3\bar y}} &\leq \pi +\hat{\gamma} - \abs{\hat{c}_{1y'} + \hat{c}_{2y'} - \pi} \nonumber \\
\abs{\hat{c}_{1\bar y}-\hat{c}_{3\bar y}} &\leq 2\pi -\hat{\gamma} - \abs{\hat{c}_{1y}-\hat{c}_{2y}} \nonumber \\
\hat{\gamma} -\pi + \abs{\hat{c}_{1y'_1} + \hat{c}_{2y'_1} - \pi} &\leq  \pi - \abs{\hat{c}_{1\bar y'} + \hat{c}_{3\bar y'} - \pi} \nonumber \\
\hat{\gamma} -\pi + \abs{\hat{c}_{1y'_1} + \hat{c}_{2y'_1} - \pi} &\leq \pi + \hat{\gamma} - \abs{\hat{c}_{1y'} + \hat{c}_{2y'} - \pi} \label{eq:trivial1}\\
\hat{\gamma} -\pi + \abs{\hat{c}_{1y'_1} + \hat{c}_{2y'_1} - \pi} &\leq 2\pi -\hat{\gamma} - \abs{\hat{c}_{1y}-\hat{c}_{2y}} \nonumber \\
-\hat{\gamma} + \abs{\hat{c}_{1y_1}-\hat{c}_{2y_1}} &\leq  \pi - \abs{\hat{c}_{1\bar y'} + \hat{c}_{3\bar y'} - \pi} \nonumber \\
-\hat{\gamma} + \abs{\hat{c}_{1y_1}-\hat{c}_{2y_1}} &\leq \pi +\hat{\gamma} - \abs{\hat{c}_{1y'} + \hat{c}_{2y'} - \pi} \nonumber \\
-\hat{\gamma} + \abs{\hat{c}_{1y_1}-\hat{c}_{2y_1}} &\leq 2\pi -\hat{\gamma} - \abs{\hat{c}_{1y}-\hat{c}_{2y}} \label{eq:trivial2}
\end{align}

Third, the $\hat{\gamma}$-system of inequalities is given by
\begin{align*}
\begin{rcases}
\abs{\hat{c}_{2\tilde y}-\hat{c}_{3\tilde y}}\\
-\pi + \abs{\hat{c}_{1\bar y}-\hat{c}_{3\bar y}} + \abs{\hat{c}_{1y'} + \hat{c}_{2y'} - \pi} \\
-\pi + \abs{\hat{c}_{1\bar y'} + \hat{c}_{3\bar y'} - \pi} + \abs{\hat{c}_{1y_1}-\hat{c}_{2y_1}}  \\
- \pi/2  + \abs{\hat{c}_{1y_1}-\hat{c}_{2y_1}}/2 + \abs{\hat{c}_{1y'} + \hat{c}_{2y'} - \pi}/2 
\end{rcases}
\leq \hat{\gamma} \\
\text{ and } \hat{\gamma} \leq
\begin{cases}
\pi - \abs{\hat{c}_{2\tilde y'} + \hat{c}_{3\tilde y'} - \pi}\\
2\pi - \abs{\hat{c}_{1y}-\hat{c}_{2y}} - \abs{\hat{c}_{1\bar y}-\hat{c}_{3\bar y}}\\
2\pi - \abs{\hat{c}_{1\bar y'} + \hat{c}_{3\bar y'} - \pi} - \abs{\hat{c}_{1y'_1} + \hat{c}_{2y'_1} - \pi}  \\
3\pi/2 - \abs{\hat{c}_{1y}-\hat{c}_{2y}}/2 - \abs{\hat{c}_{1y'_1} + \hat{c}_{2y'_1} - \pi}/2 \\
\end{cases}
\end{align*}
from which we can readily eliminate $\hat{\gamma}$.

Finally, gathering the numbered inequalities together with the inequalities from eliminating $\hat{\gamma}$ gives a system of inequality equivalent to the starting system. We proceed with further analysis and simplification. The inequalities~\eqref{eq:12} and~\eqref{eq:13} and
$$
\abs{\hat{c}_{2\tilde y}-\hat{c}_{3\tilde y}} \leq \pi - \abs{\hat{c}_{2\tilde y'} + \hat{c}_{3\tilde y'} - \pi}
$$
from the first pair of bounds in the $\hat{\gamma}$-system are combined into (for $x\neq x'$ originally and then for any $x,x'$)
\begin{equation}
\abs{\hat{c}_{xy}-\hat{c}_{x'y}} + \abs{\hat{c}_{xy'} + \hat{c}_{x'y'} - \pi}\leq \pi. \label{eq:123}
\end{equation}
Observe that for $x\neq x'$ and $y=y'$ (or for $x=x'$ and any $y,y'$), inequalities~\eqref{eq:123} reduces to box inequalities $\hat{c}_{xy}\in[0,\pi]$ for all $x,y\in\{1,2,3\}$. Next, box inequalities together imply~\eqref{eq:trivial1} and~\eqref{eq:trivial2} so it suffices to keep the box inequalities. Likewise, box inequalities together with~\eqref{eq:123} imply all of the inequalities from the $\hat{\gamma}$-system except
\begin{gather*}
\abs{\hat{c}_{2\tilde y}-\hat{c}_{3\tilde y}} \leq 2\pi - \abs{\hat{c}_{1y}-\hat{c}_{2y}} - \abs{\hat{c}_{1\bar y}-\hat{c}_{3\bar y}}\,,  \\
\abs{\hat{c}_{2\tilde y}-\hat{c}_{3\tilde y}} \leq  2\pi - \abs{\hat{c}_{1\bar y'} + \hat{c}_{3\bar y'} - \pi} - \abs{\hat{c}_{1y'_1} + \hat{c}_{2y'_1} - \pi}\,, \\
\abs{\hat{c}_{1\bar y}-\hat{c}_{3\bar y}} + \abs{\hat{c}_{1y'} + \hat{c}_{2y'} - \pi} -\pi \leq \pi - \abs{\hat{c}_{2\tilde y'} + \hat{c}_{3\tilde y'} - \pi}\,,\\
\abs{\hat{c}_{1\bar y'} + \hat{c}_{3\bar y'} - \pi} + \abs{\hat{c}_{1y_1}-\hat{c}_{2y_1}} -\pi \leq \pi - \abs{\hat{c}_{2\tilde y'} + \hat{c}_{3\tilde y'} - \pi}.
\end{gather*}
This completes the proof.
\end{proof}

The above characterization give rise to a system of inequality that bounds the quantum set in all possible directions. We extract from this system the known inequalities and several ``apparently new" inequalities.
\begin{lmm}
If a correlation is quantum, i.e. $(c_{xy})\in\cor(3,3)$, then it must satisfy the box inequalities, the TLM-type inequalities
\begin{align*}
\abs{\hat{c}_{xy}-\hat{c}_{x'y}} + \abs{\hat{c}_{xy'} + \hat{c}_{x'y'} - \pi} \leq \pi
\end{align*}
for $x\neq x',y\neq y'$ and 
\begin{gather*}
\abs{\hat{c}_{1y}-\hat{c}_{2y}} + \abs{\hat{c}_{2y'}-\hat{c}_{3y'}} + \abs{\hat{c}_{1\bar y} -\hat{c}_{3\bar y}} \leq 2\pi\,,  \\
\abs{\hat{c}_{1y}-\hat{c}_{2y}} + \abs{\hat{c}_{2y'} + \hat{c}_{3y'} - \pi} + \abs{\hat{c}_{1\bar y} + \hat{c}_{3\bar y} - \pi}  \leq 2\pi\,, \\
\abs{\hat{c}_{1y} + \hat{c}_{2y} - \pi} + \abs{\hat{c}_{2y'}-\hat{c}_{3y'}} + \abs{\hat{c}_{1\bar y} + \hat{c}_{3\bar y} - \pi}  \leq 2\pi\,,  \\
\abs{\hat{c}_{1y} + \hat{c}_{2y} - \pi} + \abs{\hat{c}_{2y'} + \hat{c}_{3y'} - \pi} + \abs{\hat{c}_{1\bar y}-\hat{c}_{3\bar y}} \leq 2\pi\,,
\end{gather*}
for all $y,y',\bar{y}$ except $y=y'=\bar{y}$. In particular, if any of these inequalities fail to hold, then the correlation is not quantum. Moreover, these are all tight quantum Bell inequalities.
\end{lmm}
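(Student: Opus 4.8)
The plan is to split the statement into a \emph{validity} part---the listed inequalities are necessary conditions for membership in $\cor(3,3)$, hence genuine quantum Bell inequalities---and a \emph{tightness} part---each is saturated by some point of $\cor(3,3)$, so the corresponding supporting hyperplane actually touches the quantum boundary rather than merely over-approximating it. Validity is essentially immediate: if $(c_{xy})\in\cor(3,3)$, then Lemma~\ref{pro:cor33analytic} furnishes $\hat\alpha,\hat\beta,\hat\gamma\in[0,\pi]$ satisfying \emph{both} the linear system~\eqref{eq:cor33analytic_angles_linear} and the nonlinear system~\eqref{eq:cor33analytic_angles_nonlinear}; discarding~\eqref{eq:cor33analytic_angles_nonlinear}, the very same triple witnesses satisfiability of~\eqref{eq:cor33analytic_angles_linear}, so Lemma~\ref{lmm:relax_equiv} hands us the box inequalities, the two-term TLM-type inequalities, and the four three-term families at once; the contrapositive is the ``not quantum'' clause. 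I would also record here that the box inequalities $-1\le c_{xy}\le1$ and the two-term inequalities (which expand into the linear inequalities appearing in the $\cor(2,m)$ characterization, of Tsirelson--Landau--Masanes type) are already known, so the genuinely new content is the four three-term families.

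For tightness I would pass to Tsirelson's vector model, i.e. realize a candidate correlation as $\ip{u_x}{v_y}=c_{xy}$ with unit vectors $u_x,v_y$, equivalently as a psd completion of the partial Gram matrix of Lemma~\ref{pro:cor33analytic}, and exhibit one saturating configuration for each inequality. For the box inequalities a deterministic (hence classical, hence quantum) correlation $c_{xy}=a_xb_y$ with $a_x,b_y=\pm1$ works. For the two-term TLM-type inequalities I would restrict two of Alice's settings, invoke the known tightness of these inequalities for $\cor(2,m)$ (where the arcsin bound is known to define the boundary), and re-embed into the $(3,3)$ scenario. For each three-term family the mechanism is the same: an absolute-value term is extremal exactly when the two relevant Gram vectors are parallel or antiparallel on a common great circle, so placing all six vectors in a common low-dimensional subspace and pushing the pairwise angles to the boundary of the governing triangle and quadrilateral inequalities produces a rank-deficient Gram matrix that saturates the chosen inequality; one then checks the point is quantum by taking $\hat\alpha,\hat\beta,\hat\gamma$ compatible with the configuration and verifying~\eqref{eq:cor33analytic_angles_nonlinear}, which holds---indeed with equality---once those auxiliary angles are likewise at the boundary of their own triangle inequalities, forcing the relevant $3\times3$ Gram determinant to vanish.

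If ``tight'' is intended in the stronger, facet-defining sense, I would instead, for each family, seek a saturating point at which~\eqref{eq:cor33analytic_angles_nonlinear} has \emph{strict} slack; since $\cor(3,3)\subseteq\mathrm{Cor}^{\mathrm{lin}}(3,3)$ (the same $\hat\alpha,\hat\beta,\hat\gamma$ witnesses both memberships) and $\cor(3,3)$ is full-dimensional in $\R^9$, continuity then shows that a full neighbourhood of that point inside the supporting hyperplane stays in $\cor(3,3)$, so the facet of the relaxation is locally a facet of $\cor(3,3)$ and the dimension count closes. (I would note, however, that the analysis announced in the introduction shows the three-term families are in the end implied by the two-term TLM-type system, so ``tight'' here is most naturally read as ``supporting hyperplane achieved with equality''.)

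The hard part is the tightness step, and specifically the \emph{simultaneous} bookkeeping it demands: a saturating point must make the chosen inequality an equality, still satisfy \emph{every} other inequality of Lemma~\ref{lmm:relax_equiv}, and be genuinely quantum, i.e. admit a triple $\hat\alpha,\hat\beta,\hat\gamma$ respecting~\eqref{eq:cor33analytic_angles_nonlinear}. Because the three-term families are indexed by triples $(y,y',\bar y)$ with coincidences permitted (only $y=y'=\bar y$ excluded), a handful of combinatorial cases---all indices distinct versus two equal---must be handled separately, and one must guard against the degenerate spherical configuration secretly violating another inequality or forcing an inconsistency among $\hat\alpha,\hat\beta,\hat\gamma$; this is exactly the obstruction that kills the naive deterministic $\pm1$ attempt in the all-distinct case, where the antipodality requirements close up into an odd cycle. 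Carrying out the dimension count for the facet reading, should that be the intended one, is the most laborious piece; the validity part, by contrast, is a one-line consequence of the two preceding lemmas.
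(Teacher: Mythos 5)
Your validity argument coincides with the paper's: membership in $\cor(3,3)$ yields, via Lemma~\ref{pro:cor33analytic}, a witness $(\hat\alpha,\hat\beta,\hat\gamma)$ for the linear system~\eqref{eq:cor33analytic_angles_linear}, and Lemma~\ref{lmm:relax_equiv} converts that into the listed inequalities; the paper phrases this as the two systems being equivalent (the excluded cases $y=y'=\bar y$ following from the box inequalities), with the ``not quantum'' clause as the contrapositive. Where you diverge is tightness: the paper simply checks saturation numerically (remarking that chordal-completion arguments would give an analytic proof), whereas you sketch an explicit construction in Tsirelson's vector model. Your route is the more informative one in principle, but as written it is a plan rather than a proof: no saturating configuration is exhibited, and the verification that such a configuration satisfies the nonlinear condition~\eqref{eq:cor33analytic_angles_nonlinear} is asserted rather than performed. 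More concretely, your claim that deterministic $\pm1$ correlations fail in the all-distinct case is wrong, and dropping it makes the tightness step easy. Take $c_{xy}=a_xb_y$ with $a_1\neq a_2$ and $a_2\neq a_3$ (hence $a_1=a_3$): then $\abs{\hat{c}_{1y}-\hat{c}_{2y}}=\abs{\hat{c}_{2y'}-\hat{c}_{3y'}}=\pi$ and $\abs{\hat{c}_{1\bar y}-\hat{c}_{3\bar y}}=0$, so the first three-term family is saturated at a classical, hence quantum, point; the ``odd cycle'' only forbids all three terms equalling $\pi$ simultaneously, which would violate the inequality anyway. Analogous sign choices saturate the other three families, and \emph{every} deterministic point saturates every two-term TLM-type inequality (for $\hat{c}\in\{0,\pi\}$ exactly one of $\abs{\hat{c}_{xy}-\hat{c}_{x'y}}$ and $\abs{\hat{c}_{xy'}+\hat{c}_{x'y'}-\pi}$ equals $\pi$ and the other vanishes), so the elaborate spherical constructions and the $\cor(2,m)$ re-embedding are unnecessary. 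Finally, your facet-defining digression goes beyond what the paper claims: its notion of ``tight'' is only that each inequality is achieved with equality by some quantum correlation, so the dimension count need not be carried out.
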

\begin{proof}
It is clear that the system of inequality in this Theorem is equivalent to that of Lemma~\ref{lmm:relax_equiv} (e.g. inequalities where $y=y'=\bar{y}$ follow from box inequalities). If any of these inequalities is not satisfied, then the correlation is not in the relaxation of $\cor(3,3)$ and hence not a quantum correlation. It remains to verify that these are all tight inequalities, i.e. there exists a quantum correlation that saturates them (individually). For simplicity, we check this numerically, but it is also possible to prove using chordal completion arguments (for a different graph depending on which inequality is tight).



\end{proof}

One might be tempted to conclude that, for instance, the inequalities
\begin{align*}
\abs{\hat{c}_{1y}-\hat{c}_{2y}} + \abs{\hat{c}_{2y'}-\hat{c}_{3y'}} + \abs{\hat{c}_{1\bar y} -\hat{c}_{3\bar y}} \leq 2\pi
\end{align*}
with $y\neq y'\neq \bar y$ are ``new" quantum Bell inequalities in the sense that it bounds the quantum correlation set and is not of the TLM type. However, the following Lemma shows that this is not really the case.

\begin{lmm}
The system of linear inequalities in Lemma~\ref{lmm:relax_equiv} is equivalent to the system of inequalities
\begin{align*}
\abs{\hat{c}_{xy}-\hat{c}_{x'y}} + \abs{\hat{c}_{xy'} + \hat{c}_{x'y'} - \pi} \leq \pi
\end{align*}
for all $x,y,x',y'$.
\end{lmm}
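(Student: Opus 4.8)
The plan is to prove the equivalence by establishing the two inclusions. One direction is immediate: the family $\abs{\hat{c}_{xy}-\hat{c}_{x'y}} + \abs{\hat{c}_{xy'} + \hat{c}_{x'y'} - \pi} \leq \pi$ is literally the first line of the system in Lemma~\ref{lmm:relax_equiv}, so that system trivially implies it for all $x,x',y,y'$. All the content is in the converse: assuming the single TLM-type family holds for \emph{every} choice of rows $x,x'$ and columns $y,y'$, I must recover the four three-term families listed in Lemma~\ref{lmm:relax_equiv}.

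The engine for the converse is an elementary fact about the three entries of a fixed column. Writing $p=\hat{c}_{1z},\ q=\hat{c}_{2z},\ r=\hat{c}_{3z}$ for a column $z$, the telescoping identities $p-r=(p-q)+(q-r)$, $p-r=(p+q-\pi)-(q+r-\pi)$, $q+r-\pi=(p+q-\pi)-(p-r)$, and $p+q-\pi=(q+r-\pi)+(p-r)$, combined with the triangle inequality $|A\pm B|\le|A|+|B|$, show that each of $\abs{p-r}$, $\abs{q+r-\pi}$, $\abs{p+q-\pi}$ is at most the sum of two of the others. I would then derive each three-term inequality by adding two suitably chosen instances of the single family --- one instance supplying the ``difference''-type leftover term and one supplying the ``sum''-type leftover term, both placed at the column singled out in the target inequality (the one written $\bar y$, or with a tilde) --- and then bounding those two leftover terms below by the one surviving term using the relevant identity above. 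For instance, adding $\abs{\hat{c}_{1y}-\hat{c}_{2y}}+\abs{\hat{c}_{1\bar y}+\hat{c}_{2\bar y}-\pi}\le\pi$ (rows $1,2$) and $\abs{\hat{c}_{2y'}-\hat{c}_{3y'}}+\abs{\hat{c}_{2\bar y}+\hat{c}_{3\bar y}-\pi}\le\pi$ (rows $2,3$), then using $\abs{\hat{c}_{1\bar y}-\hat{c}_{3\bar y}}\le\abs{\hat{c}_{1\bar y}+\hat{c}_{2\bar y}-\pi}+\abs{\hat{c}_{2\bar y}+\hat{c}_{3\bar y}-\pi}$, gives the first three-term family; the other three families come out the same way after permuting the roles of the row pairs $(1,2)$, $(2,3)$, $(1,3)$ and using the corresponding identity.

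I do not anticipate a real obstacle here; the only thing requiring care is the bookkeeping of which row pair and which column slot to feed into each of the two invoked instances of the single family, and a check that the free indices produced range over exactly the set claimed in Lemma~\ref{lmm:relax_equiv} --- in particular that the degenerate cases $y=y'=\bar y$ are absorbed by the box inequalities, which are themselves the $x=x'$ specialization of the single family. The conceptual remark worth recording is that the apparently new three-term inequalities are not genuinely new precisely because, within a single column, the three row pairs $(1,2)$, $(2,3)$, $(1,3)$ are linearly dependent in this telescoping fashion, so any valid linear inequality mixing three pairs across columns folds back onto two-pair TLM-type relations.
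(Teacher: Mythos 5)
Your proposal is correct, and it takes a genuinely different route from the paper. The paper's proof is purely computational: it defines the two polytopes $P_1$ (the full system of Lemma~\ref{lmm:relax_equiv}) and $P_2$ (the TLM-type family alone) and verifies with MPT3/YALMIP in MATLAB that their vertex sets coincide. You instead give an analytic derivation: the forward implication is trivial since the TLM family is literally the first line of the system in Lemma~\ref{lmm:relax_equiv}, and for the converse you obtain each three-term inequality by summing two instances of the two-term family anchored at the distinguished column (e.g.\ $\abs{\hat{c}_{1y}-\hat{c}_{2y}}+\abs{\hat{c}_{1\bar y}+\hat{c}_{2\bar y}-\pi}\le\pi$ and $\abs{\hat{c}_{2y'}-\hat{c}_{3y'}}+\abs{\hat{c}_{2\bar y}+\hat{c}_{3\bar y}-\pi}\le\pi$) and then lower-bounding the sum of the two leftover terms by the remaining target term via a telescoping identity such as $(\hat{c}_{1\bar y}+\hat{c}_{2\bar y}-\pi)-(\hat{c}_{2\bar y}+\hat{c}_{3\bar y}-\pi)=\hat{c}_{1\bar y}-\hat{c}_{3\bar y}$ and the triangle inequality. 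I checked all four three-term families and the required identities exist in each case, so the argument closes. What your approach buys is a human-verifiable, software-free proof that moreover explains \emph{why} the three-term inequalities are redundant (the linear dependence of the row pairs $(1,2)$, $(2,3)$, $(1,3)$ within a column); what the paper's approach buys is that vertex enumeration certifies equality of the polytopes without having to guess the right combination of inequalities, which scales more easily if the candidate systems were larger or the redundancy less structured. The one point worth making explicit in a write-up is the bookkeeping you already flag: each derivation needs both leftover terms placed at the same column $\bar y$ (respectively the tilde column), and the parity of ``difference'' versus ``sum'' leftovers must match the type of the surviving term, which it does in all four cases.
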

\begin{proof}
Let $P_1$ be the polytope defined by the inequalities in Lemma~\ref{lmm:relax_equiv} and $P_2$ be the polytope defined by the inequalities in this Lemma. We show that the extreme points or vertices of two polytopes are the same, hence the two system of inequalities are equivalent. The computation is done using MPT3~\cite{mpt3} and YALMIP~\cite{yalmip} packages in MATLAB.
\end{proof}

In summary, in the nonlinear coordinate system the linear relaxation of $\cor(3,3)$ is fully characterized by the box inequalities and the TLM-type inequalities.

\section{Generalizations}
Let us remark the bigger picture behind our method. Recall that $\cor(n,m)=\calE(K_{n,m})$
where $K_{n,m}$ is the complete bipartite graph on $n+m$ vertices and $\calE(K_{n,m})$ is the coordinate projection of the elliptope of $(n+m)\times(n+m)$ positive semidefinite matrices with diagonal one~\cite{Thinh}. Since for any graph $G$
\begin{align*}
\cut^{\pm1}(G)\subseteq\calE(G)\subseteq \cos(\pi\cut^{01}(G))\subseteq \cos(\pi\met^{01}(G))
\end{align*}
where $\cut^{\pm1}(G)$, $\cut^{01}(G)$, $\met^{01}(G)$ are {\em polytopes} induced by $G$ and cosine mapping is applied component-wise, the connection with $\calE(K_{n,m})$ allows one to compute approximations to the quantum correlation set via the middle inclusion~\footnote{Note that all inclusions are strict for the graph $K_{4,4}$.}. Moreover, the leftmost inclusion is an equality iff $G$ is acyclic, the rightmost inclusion is an equality iff $G$ has no $K_5$ minor, and the middle inclusion is an equality iff $G$ has no $K_4$ minor (see~\cite{L97} for more details).

Therefore, depending on the interested scenario, which gives rise to certain pattern of entries as captured by the graph $G$, one can get quantum Bell inequalities constraining the correlations by computing facets of the cut polytope $\cut^{01}(G)$. This is what we computed in Section~\ref{sec:relaxation}.


\section{Conclusions}
Explicit description of the quantum set $\cor(n,m)$ is expected to be extremely complicated for large $n,m$ because the geometric object is of very high dimension. We discovered that the complexity is already high for relatively small scenario $n=m=3$. We obtain exact characterization in terms of semidefinite programs, and exact but implicit characterizations in the linear and pairwise-angle coordinate systems. In the derivation, we observe the emergence of the tradeoff between implicit, compact description and explicit, complex description due to our use of Sylvester's criterion for positive semidefiniteness. (We do not know if this is intrinsic or one could hope to bypass Sylvester's criterion with a different characterization of positive-semidefiniteness.)

Fortunately, for the purpose of computing quantum Bell inequalities, a relaxation of the quantum correlation set suffices: $\cor(n,m)\subseteq \cos(\pi\cut^{01}(K_{n,m}))$. Thus, we are able to compute quantum Bell inequalities that bound the quantum set from the outside. We found that there are no new inequalities beyond the well-known Tsirelson-Landau-Masanes type. Finally, we sketch a general method to compute quantum Bell inequalities in more general scenarios including non-correlation scenarios. 

An obvious future direction is to compute an explicit description of $\cor(3,3)$ using either a modified quantifier elimination procedure for linear coordinates, or completely novel elimination techniques for pairwise-angle coordinates. In pairwise-angle coordinates, such result will shed light on the role of the 4-by-4 minor in constraining further the relaxation we obtained, and as well as the tightness of our relaxation. Another direction would be to investigate alternative parametrizations (and their geometrical or physical interpretations) that would linearize the description of the quantum set, analogous to the $\arccos$ parametrization linearizing $\cor(2,n)$.

\acknowledgements{We would like to thank Valerio Scarani and Antonios Varvitsiotis for helpful discussions. This work is supported by the National Research Fund and the Ministry of Education, Singapore, under the Research Centres of Excellence programme.}

\bibliography{qBellcites}

\end{document}